\newtheorem{example}[theorem]{Example}
\begin{document}

\title{Network Topology Design to Influence the Effects of Manipulative Behaviors in a Social Choice Procedure}

\author[1]{Athanasios-Rafail Lagos*}

\author[1,2]{George P. Papavassilopoulos}

\authormark{Athanasios-Rafail Lagos \textsc{et al}}

\address[1]{\orgdiv{Department of Electrical and Computer Engineering}, \orgname{National Technical University of Athens}, \country{Greece}}

\address[2]{\orgdiv{Ming Hsieh Department of Electrical and Computer Engineering -Systems}, \orgname{University of  Southern California}, \country{US} }

\corres{*Athanasios-Rafail Lagos \email{lagosth993@gmail.com}}

\abstract[Summary]{A social choice procedure is modeled as a Nash game among the social agents. The agents are communicating with each other through a social communication network modeled by an undirected graph and their opinions follow a dynamic rule modelling conformity. The agents' criteria for this game are describing a trade off between self-consistent and manipulative behaviors. Their best response strategies are resulting in a dynamic rule for their actions. The stability properties of these dynamics are studied. In the case of instability, which arises when the agents are highly manipulative, the stabilization of these dynamics through the design of the network topology is formulated as a constrained integer programming problem. The constraints have the form of a Bilinear Matrix Inequality (BMI), which is known to result in a nonconvex feasible set in the general case. To deal with this problem a Genetic Algorithm, which uses an LMI solver during the selection procedure, is designed. Finally, through simulations we observe that in the case of topologies with few edges, e.g. a star or a ring, the isolation of the manipulative agents is an optimal (or suboptimal) design, while in the case of well-connected topologies the addition or the rewiring of just a few links can diminish the negative effects of manipulative behaviors.}

\keywords{Network topology design, Constrained integer programming, Bilinear Matrix Inequality, Genetic Algorithm, Opinion dynamics, Nash game, Multi-agent system, Networked system}

\maketitle

\section{Introduction}
In recent years great progress has been made in the mathematical modeling and study of social phenomena. A topic of current interest is the study of the evolution of social agents' opinions about a certain issue. The knowledge of the mechanisms of the formation and the propagation of the agents' opinions are very useful in several fields. For example, in marketing the advertisers care about the opinion of the consumers for the advertised product and in politics the politicians care about the opinion of the agents about their agenda. Thus, a lot of work has been done in this field  \cite{de_groot},\cite{Friedkin1},\cite{Friedkin2},\cite{Sznajd},\cite{h&k1},\cite{h&k2},\cite{Fortunato1},\cite{Krause},\cite{extremist},\cite{Deffuant},
\cite{Gionis},\cite{stubborn},\cite{acemoglu2010spread},\cite{forster2014trust}, many interesting cases have been modeled and analysed, some of which are summarized in \cite{Survey}, \cite{Friedkin_book}, and new ideas continue to be proposed and studied up to now \cite{Etesami1},\cite{Etesami2},\cite{Etesami3}.

The majority of these works \cite{de_groot}-\cite{stubborn} consider a single state for the agents, modelling their opinion, belief or attitude about an issue, and they study the dynamics of this state. The dominant mechanism that determines the evolution of the opinions is considered to be the averaging of the opinions of the agent's peers. The reason for this modelling are the tendencies of an agent to imitate her peers and to conform to her social group attitudes, which are both well-studied social phenomena. In fact, this modelling of opinion dynamics has been verified to be realistic by experimental data of a field research in India \cite{chandrasekhar2012testing}.

However, in many cases, such as social choice procedures (e.g. elections, referendums, polls), the organisation who studies the opinion dynamics cares to predict or to affect the outcome of this procedure, which is determined by the agents actions or behaviors. So, the question whether an agent's opinion imply a specific behavior-action naturally arises. The answer that the field of social psychology gives to this question is negative, in many cases the opinions do not imply specific actions \cite{lapiere1934attitudes},\cite{wicker1969attitudes},\cite{glasman2006forming}. Behavior is not solely dependent on one's beliefs but is drastically affected by the situations and in some cases behavior affects ones attitudes and beliefs \cite{marsh2005influence}.

In addition to moral and situational factors, game theory suggests that an agent's behavior is also dependent on her desire to maximize her private interests \cite{von2007theory}. So, the action-behavior of an agent is also shaped by her utility gained form the outcome of the social choice procedure, which also depends on the other agents' actions. This indicates that an agent's action depends on her neighbors' actions and it is a best response to them. This perspective adds the useful insight that the agents usually act antagonistically to their neighbors and they do not just conform to their peers' pressure \cite{Kordonis1}.

An advantage of the game theoretic modelling for the agents' actions is that it can explain better the emergence of manipulative behaviors in social choice procedures, which is a topic of significant interest. Several recent studies on several countries like U.S. \cite{benkler2018network},\cite{aral2019protecting} and Argentina \cite{szwarcberg2015mobilizing} indicate that social networks have become an arena of manipulative behaviors \cite{woolley2018computational}. Paid brokers of political parties, fake accounts (bots), echo chambers, organised disinformation (fake news, slandering) are some of the manipulation techniques that have arisen in the fertile ground of the online political conversations. Furthermore, in this new environment of political struggle each agent may act in a manipulative way in an effort to pull the social outcome to her favor, however, she may be less manipulative than an expert of the previous categories. Such behaviors are considered in some recent works \cite{acemoglu2010spread}-\cite{Etesami3}.

In this work, we extend a model introduced in \cite{Etesami3} describing a social choice procedure, where the population structure is modeled by an undirected graph and the agents' actions depend both on their opinions, which evolve dynamically in our model, and on their neighbors' actions. Specifically, we consider that each agent has an internal belief or opinion, which evolves in time in a way modelling a tendency of conformity to the public opinion. Each agent has also an expressed action in the social choice procedure. Each opinion matches to a proper action. However, the action of each agent isn't identical to her proper action, but it derives from the minimization of a criterion modeling the tendency of the agent to manipulate, i.e. to deviate from her proper action in order to pull the social outcome to her favor.

The resulting game between the agents is considered to be repeated in discrete time steps. The action shaping criteria of each agent retain the same form at each step. So, we formulate a series of one-step games where we seek for the Nash strategy profiles. These strategy profiles result in a dynamic rule for the actions of the agents coupled with the opinion dynamics. It is interesting that in the case that the agents are highly manipulative these dynamics become unstable, since the social outcome stands as a tug of war among the agents who try to pull it to their side.

Motivated form this fact, we study the stability properties of these dynamics and we deduce a sufficient condition that guarantees the convergence of the system to a bounded state. This condition implicates the manipulative tendencies of the agents and the graph structure with the stability of the system, stating that the acceptable manipulative behavior of an agent is relative to her position in the graph. Simulations are also presented in order to examine how the opinion and action dynamics behave over several well known graph structures, such as random graphs, lattices and small world graphs.

Subsequently, we consider the problem of changing the social network's topology in order to influence the effects of manipulative behaviors. The network topology has been chosen as our designing parameter for two basic reasons. At first, the network topology is a parameter that the social network's administrator can affect and thus influence the agents' behaviors in an indirect way, which may lead to less effort and costs than the enforcement of strict rules to the users of the network. Secondly, the network topology design is an emerging problem in many scientific fields nowadays, such as security \cite{Kordonis2}, communication networks \cite{chen2007network}, sensor networks \cite{ferentinos2007adaptive}, \cite{kar2008sensor}, distributed optimization \cite{gross2011optimized}, distributed LANs  \cite{khan2012fuzzy}, \cite{saad2018multi},  UAVs navigation \cite{uavs}, cyberphysical systems \cite{cyberphysical}, convergence of mean field games \cite{Kordonis3} etc.

So, a general formulation and study of a network topology design for the stabilisation of a system of unstable dynamics of interconnected agents can be applied to many practical problems of current interest. It must be specified that in our work the term social network corresponds to its digital realisation and not to its abstract concept of a representation of human relationships, so an administrator exists and the topology can be affected. We would like to note here that in contrast to its practicality the existence of one or more administrators in such networks raises the more intriguing question of who will control the administrators, who have the power to affect the other agents' manipulability and the final outcome.

For the topology design procedure, we study the case of an initial topology resulting in unstable dynamics and we want to find a new topology that results in stable dynamics and that is close to the initial topology with respect to the number and the exact position of their edges. This problem is formulated as an integer programming problem with a Lyapunov inequality for discrete time systems (known also as Schur's inequality) as constraint. Each decision variable of this optimisation problem represent either the existence of an edge between two agents or one of the components of the Lyapunov matrix. The constraint is nonlinear with respect to our decision variables and it can be written as a Bilinear Matrix Inequality, which is known to be a nonconvex problem in its general case \cite{Safonov}. A similar approach involving integer optimisation with a Bilinear Matrix Inequality constraint for the graph topology design problem has been presented in a similar design problem \cite{gross2011optimized}, where the authors considered a LMI relaxation of the problem and a branch and bound technique to deal with the integer decision variables.

In this work, we develop a genetic algorithm to deal with this problem. This algorithm searches only for the values of the integer decision variables representing the edges of the graph, while a Linear Matrix Inequality solver is used to check the feasibility of each new topology by solving the Lyapunov inequality with the topology variables fixed, which results to be linear with respect to the symmetric matrix of the Lyapunov function. This procedure is repeated for many generations, where new topologies are produced by the application of the genetic operators.

Finally, simulations of the results of the proposed algorithm are presented. The behavior of the algorithm is studied over several different initial topologies, where the agents' parameters have been chosen properly to raise instabilities in the dynamics. Through the examination of these test cases we derive conclusions on the functionality of the proposed algorithm and the relevancy of our results with the expected ones from our empirical perception of social networks and social choice procedures.

\section{Problem formulation}

\subsection{Notation}
We consider an undirected graph $G=(V,E)$. By $n$ we denote the number of the vertices of the graph, which represent the agents. We denote by $N_i$ the neighborhood of the agent $i$, $N_i=\{j:(i,j) \in E\}$ and by $d_i$ the degree of node $i$, that is the size of its neighborhood. Let $A$ be the adjacency matrix of the graph, it is a $n\times n$ symmetric matrix and its $(i,j)$ entry is 1 if nodes $i$ and $j$ are adjacent to each other and 0 otherwise. Let $D=\textrm{diag}\{d_i\}$ be the diagonal degree matrix, $C=diag\{c_i\}$ be a diagonal matrix of the self-confidence parameters $c_i$ and $G=diag\{g_i\}$ be a diagonal matrix of the manipulability parameters $g_i$. The symbol $\mathbb{1}$ stands for the $n\times 1$ vector with all its coordinates equal to 1. The symbol $\mathcal{I}$ stands for the identity $n\times n $ matrix and the symbols $e_i, i=1...n$ stand for the standard basis of $\mathbb{R}^n$ . For a set $A$ we denote $\mathcal{X}_A$ its indicator function, i.e. $\mathcal{X}_A(x)=1$ if $x\in A$ and $\mathcal{X}_A(x)=0$ elsewhere. The symbolism $\lceil \cdot \rceil$ denotes rounding to the next natural number and the symbolism $\lceil \cdot \rceil_{even}$  denotes rounding to the next even natural number. The space of the square $n\times n$ symmetric positive definite matrices is denoted $\mathcal{M}^{S+}_n$. The symbol $A^T$ stands for the transpose of the matrix $A$ and the symbol $\lambda_i(A)$ denotes the i-th eigenvalue of $A$. All the norms $\|\cdot\|$ that have no subscript stand for the 2-norm.

\subsection{Derivation of the Opinion Dynamics}
At first, the mechanism that determines the evolution of the agents' opinions is studied. The opinions, beliefs or attitudes of the agents represent an ideological state, that expresses what they believe about an issue and not what they actually do. The opinion/attitude of the agent $i$ is denoted by $\theta_i(k)$ at each time step $k$, and its value is a real number. In field researches, attitudes are usually measured in a five point scale, however, we consider here a continuous and unbounded analogue, which is common in the opinion dynamics literature.

It is considered that the main factors that shape the opinions in time are imitation and conformity. That is, the agents' opinions tend to be affected with their neighbors' opinions through continuous dialogue and finally reach a consensus. This model of opinions' evolution is well known and studied for many years \cite{de_groot}-\cite{Friedkin2}, \cite{stubborn}. In fact, in \cite{Friedkin2}, \cite{stubborn} the model has been enriched with the inclusion of stubborn agents, i.e. people who insist on their initial beliefs, but since their presence affects primarily the equilibrium of the opinion dynamics and not their stability properties, we shall not include such agents in our model. So, every agent has an initial opinion $\theta_i(0)$ and she changes her opinion at each time step according to following dynamic rule:

\begin{equation}
\theta_i(k+1)=\frac{c_i}{d_i+c_i}\theta_i(k)+\frac{1}{d_i+c_i}\sum_{j\in N_i}\theta_j(k)
\end{equation}
where $c_i$ is a factor analogue to the self-confidence of the agent for her opinion.

\subsection{Derivation of the Action Dynamics}
The actions of the agents represent what they actually do, in our case what they choose in the social choice procedure. The action/behavior of each agent is denoted by $u_i(k)$ at each time step $k$ and its value is a real number. As with the opinions, the actions could also be modeled to take values in a discrete scale, however, in this work we consider a continuous relaxation of that more difficult problem.

In contrast with the opinions which are shaping by a progressive conformity to the average beliefs, the criteria determining the action of each agent in every time step depict the tendency of the agents to manipulate the social outcome to their favor. That is, each agent may deviate her action from the one dictated by her beliefs in order to pull the social outcome towards her desired direction. In other words, as pointed out in \cite{Etesami2}, it is a common phenomenon in politics that the people who disagree with what they perceive as the expected social outcome tend to overstate their opinions, leading their neighbors to misperceptions of the public opinion and conform to these false estimations, thus pulling the social outcome to their favor. For this reason, an important parameter of their criteria is their estimation of the social outcome, based on their available information.

\begin{assumption}
It is assumed that the agents have local information of the other agents' actions, that is they know only the actions of their neighbors.
\end{assumption}

\begin{assumption}\label{as2}
It is assumed also that the information pattern is Markovian, i.e. at each time step they know only the last actions of their neighbors forgetting the past.
\end{assumption}
So, the available information for each agent is:
\begin{equation}\label{info}
  I_i(k)=\{\theta_i(k), u_j(k-1), \forall j\in N_i\}
\end{equation}
According to this information pattern the estimated social outcome for each agent is her local average, evaluated on the available samples at time $k$:
\begin{equation}
\tilde{u}_i(k)=\frac{\sum_{j\in N_i}u_j(k-1)+u_i(k)}{d_i+1}
\end{equation}
Based on the aforementioned concepts the criteria that determine the actions of each agent are dependent on her current opinion and on her locally estimated social outcome, so they are defined at each time step as follows:
\begin{align}
J^{a}_i(I_i(k))= (u_i-\phi_i(\theta_i))^2+g_i(\tilde{u}_i-\phi_i(\theta_i))^2
\end{align}
where $\phi_i(.)$ is a continuous transformation matching each agent's opinion to a desired behavior-action. The first term of the cost function $(u_i(k)-\phi_i(\theta_i(k)))^2$ indicates the self-consistency of the agent, i.e. how close her action is to an action consistent with her opinion, while the second term $g_i(\tilde{u}_i-\phi_i(\theta_i(k)))^2$ indicates the manipulative/opportunistic ends of the agent, i.e. how much she cares to affect the social outcome through her action so as to bring it close to her desirable outcome. The parameters $g_i$ determine the ratio between self-consistent and manipulative behaviour for each agent.

\begin{remark}
If the second assumption \ref{as2} is relaxed by adding memory to the agents, so as to be able to predict the social outcome based on all the previous actions of their neighbors, the one-step Nash game examined here will be converted to a dynamic one. The dynamic game is of high complexity, so assuming that the social agents have bounded rationality and they do not seek to solve a difficult problem to determine their social behavior, we deal with the one-step Nash game which is tractable.
\end{remark}

Assuming that the agents choose their actions rationally based on their criteria we seek for the Nash equilibrium solution of the one step game. These best-response actions derive from the solution of the following system of equations:
\begin{equation}\label{gi_eq}
\big\{\frac{\partial J^{a}_i}{\partial u_i}=0 \big\}
\end{equation}
which have the following form:

\begin{align*}
\frac{\partial J^a_i}{\partial u_i}=0 \Rightarrow 2(u_i-\phi(\theta_i))+2g_i\bigg(\frac{\sum_{j\in N_i}u_j+u_i}{d_i+1}-\phi_i(\theta_i)\bigg)\frac{1}{d_i+1}=0
\end{align*}
solving these equations with respect to $u_i$ and using the information pattern $I_i(k)$ to evaluate each quantity in accordance with the available sample at time $k$ we obtain the following dynamics for the actions:

\begin{align}\label{u_tel}
u_i(k+1)&=\big(1+\frac{d_ig_i}{g_i+(d_i+1)^2}\big)\phi_i(\theta_i(k+1))-\frac{g_i}{g_i+(d_i+1)^2}\sum_{j\in N_i}u_j(k)
\end{align}
Defining now the diagonal matrices
\begin{align}
G_{\theta}=diag\big\{1+\frac{d_ig_i}{g_i+(d_i+1)^2}\big\}
\end{align}
and
\begin{align}
G_u=diag\big\{\frac{g_i}{g_i+(d_i+1)^2}\big\}
\end{align}
we define the matrix
\begin{equation}
  A_u=G_uA
\end{equation}
and rewrite the equation \eqref{u_tel} in matrix form:

\begin{equation}\label{u_matrix}
u(k+1)=G_{\theta}\Phi(\theta(k+1))-A_uu(k)
\end{equation}
where $u(k)=[u_1(k)...u_n(k)]^T$ and $\Phi(\theta(k+1))=[\phi_1(\theta_1(k+1))...\phi_n(\theta_n(k+1))]^T$.

\section{Stability Analysis}

\subsection{Known results on opinion dynamics}\label{OD}
For the evolution of the opinions of the agents we consider the following dynamics:
\begin{equation}
\theta_i(k+1)=\frac{c_i}{d_i+c_i}\theta_i(k)+\frac{1}{d_i+c_i}\sum_{j\in N_i}\theta_j(k)
\end{equation}
which can be summarized using matrix notation
\begin{equation}
A_{\theta}=(D+C)^{-1}(A+C)
\end{equation}
to the following expression:
\begin{equation}
\theta(k+1)=A_{\theta}\theta(k)
\end{equation}
where $\theta(k)=[\theta_1(k)...\theta_n(k)]^T$ and $A_{\theta}$ is a row-stochastic, aperiodic matrix. So, $\theta(k)$ converges to a limit $\theta^c$ which is actually a consensus on each connected subgraph. For some results on these the reader could study \cite{de_groot} and for a more general description one could study the criteria summarised in \cite{Tsitsiklis}. So the following statements hold:
\begin{align}\label{theta_cauchy}
\|\theta(k+1)-\theta(k)\|\rightarrow 0 \\
\|\theta(k)-\theta^c\|\rightarrow 0
\end{align}

\subsection{Stability analysis of the coupled opinion and action dynamics}
We continue our analysis by considering the augmented vector $z(k)=[\theta_1(k)...\theta_n(k), u_1(k)...u_n(k)]^T$ and the resulting augmented system describing its dynamics. For simplicity of the presentation we will use the notation $\Phi\circ A_{\theta}\theta(k)$ to denote the nonlinear function $\Phi(\theta(k+1))$ . So we obtain the following dynamics:

\begin{equation}\label{dyn}
z(k+1)=
\left[\begin{array}{cc} A_{\theta} & 0\\ G_{\theta}\Phi\circ A_{\theta} & -A_u \end{array}\right]z(k)
\end{equation}

\begin{lemma}
If there exists a symmetric, positive definite matrix $P$ such that $A_u^TPA_u-P<0$ and the function $\Phi$ is continuous in $\mathcal{R}^n$ and locally Lipschitz in a neighborhood of $\theta^c$ with a Lipschitz constant $L_{\Phi}$, then the coupled dynamics \eqref{dyn} have an equilibrium which is globally asymptotically stable.
\end{lemma}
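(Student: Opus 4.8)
The plan is to exploit the cascade structure of \eqref{dyn}: the $\theta$-subsystem is autonomous and its behaviour is already known, and it drives the $u$-subsystem through a term that becomes asymptotically constant. First I would pin down the equilibrium. The hypothesis $A_u^T P A_u - P < 0$ with $P \succ 0$ is exactly the discrete (Stein) Lyapunov condition, so $A_u$ is Schur stable; then $-A_u$ is Schur as well, with the same certificate $P$ since $(-A_u)^T P (-A_u) - P = A_u^T P A_u - P \prec 0$, and in particular $-1 \notin \sigma(A_u)$, so $I + A_u$ is invertible. Using the known limit $\theta(k) \to \theta^c$ from Section~\ref{OD}, set $u^{*} = (I + A_u)^{-1} G_{\theta}\Phi(\theta^c)$; a direct substitution shows that $z^{*} = [(\theta^c)^{T},\,(u^{*})^{T}]^{T}$ satisfies $A_{\theta}\theta^c = \theta^c$ and $(I + A_u)u^{*} = G_{\theta}\Phi(\theta^c)$, hence is an equilibrium of \eqref{dyn}. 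Since $A_{\theta}$ has a simple eigenvalue at $1$, this $z^{*}$ is the equilibrium selected by the consensus value of the given initial opinion profile, and the claim is understood as: every trajectory converges to such an equilibrium, and each of them is Lyapunov stable.

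For attractivity, put $e(k) = u(k) - u^{*}$. Subtracting $u^{*} = G_{\theta}\Phi(\theta^c) - A_u u^{*}$ from \eqref{u_matrix} yields the perturbed linear recursion
\[
e(k+1) = -A_u\,e(k) + w(k), \qquad w(k) = G_{\theta}\big(\Phi(\theta(k+1)) - \Phi(\theta^c)\big).
\]
Because $\theta(k+1) = A_{\theta}\theta(k) \to \theta^c$ and $\Phi$ is continuous, $w(k) \to 0$. Thus $e$ follows a Schur-stable linear recursion driven by a vanishing input, so $e(k) \to 0$. I would prove this either by variation of constants with a bound $\|(-A_u)^{k}\| \le M\rho^{k}$, $\rho \in (0,1)$, splitting the convolution sum past a time where $\|w\|$ is small; or, more in line with the hypothesis, through $V(e) = e^{T}Pe$, for which $V(e(k+1)) - V(e(k)) = -e(k)^{T}Qe(k) - 2e(k)^{T}A_u^{T}Pw(k) + w(k)^{T}Pw(k)$ with $Q = P - A_u^{T}PA_u \succ 0$; absorbing the cross term by Young's inequality makes $V$ strictly decrease whenever $\|e(k)\|$ exceeds a fixed multiple of $\|w(k)\|$, which forces $e(k) \to 0$. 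Hence $z(k) \to z^{*}$ for every initial condition.

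For Lyapunov stability of $z^{*}$: power-boundedness of the row-stochastic matrix $A_{\theta}$ keeps $\theta(k)$ in a prescribed neighbourhood of $\theta^c$ once the initial opinion perturbation is small, and there the local Lipschitz bound gives $\|w(k)\| \le \|G_{\theta}\|\,L_{\Phi}\,\|\theta(k+1) - \theta^c\|$, a quantity that is controlled uniformly in $k$ by the size of the initial perturbation (via \eqref{theta_cauchy} and power-boundedness of $A_{\theta}$). Inserting this into the inequality for $V(e)$ above bounds $\sup_{k} V(e(k))$, hence $\sup_{k}\|u(k) - u^{*}\|$, by a constant times $\|z(0) - z^{*}\|$; together with the continuity of $\Phi$ (so that $u^{*}$ depends continuously on $\theta^c$) this yields the $\varepsilon$-$\delta$ estimate. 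Combined with attractivity, $z^{*}$ is globally asymptotically stable.

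The step I expect to be the main obstacle is reconciling the two regimes of the hypothesis on $\Phi$ with the marginal stability of the opinion dynamics: mere global continuity suffices for the vanishing-input argument and hence for attractivity, but genuine Lyapunov stability requires keeping $\theta(k)$ inside the Lipschitz neighbourhood of $\theta^c$ for all $k$ and turning the (only neutrally stable) convergence of the $\theta$-subsystem into a uniform-in-$k$ bound on $\|w(k)\|$ --- and dealing carefully with the eigenvalue $1$ of $A_{\theta}$ is exactly what makes the phrase ``globally asymptotically stable equilibrium'' precise.
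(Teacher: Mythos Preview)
Your argument is correct and in several respects tighter than the paper's. The paper takes a different route: it never writes down $u^{*}$ or the error $e(k)$, but instead introduces the $P$-norm $\|x\|_P=\sqrt{x^TPx}$ (for which the hypothesis gives $\|A_u\|_P<1$) and bounds the successive increment $x_k:=\|u(k)-u(k-1)\|_P$, obtaining $x_{k+1}\le a\,x_k+\delta_k$ with $a=\|A_u\|_P<1$ and $\delta_k=L_\Phi\|G_\theta\|_P\|\theta(k+1)-\theta(k)\|_P\to 0$ once $\theta(k)$ has entered the Lipschitz neighbourhood; a scalar recursion lemma (cited from Polyak) then gives $x_k\to 0$, from which convergence of $u(k)$ is asserted. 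That last inference is not quite justified as written---increments tending to zero do not by themselves make a sequence Cauchy---and the paper does not touch the Lyapunov-stability half of the claim or the eigenvalue-$1$ subtlety for $A_\theta$ that you flag. Your route, identifying $u^{*}=(I+A_u)^{-1}G_\theta\Phi(\theta^c)$ and running a vanishing-input/ISS argument on $e(k)=u(k)-u^{*}$, delivers convergence directly and lets you separate attractivity from stability; the paper's increment approach has the minor advantage of using only $\theta(k+1)-\theta(k)$ and hence not needing the limit $\theta^c$ explicitly.
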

\begin{proof}
At first, we define the $P$-norm of a vector $x$: $\|x\|_P:=\sqrt{x^TPx}$ and of a matrix $A$: $\|A\|_P:=\textrm{sup}_{\{\|x\|_P=1\}}\{\|Ax\|_P\}$. From these definitions we have that if $A_u^TPA_u-P<0$ holds then $\|A_u\|_P<1$.\\

For the opinion dynamics, $\theta(k+1)=A_{\theta}\theta(k)$, it is known to be stable as we have already discussed in a previous section. So, $ \exists K: \forall k>K$ $\theta(k)$ belongs to a neighborhood of $\theta^c$ where the mapping $\Phi$ is Lipschitz. Thus $\forall k>K$ the following holds for the actions:

\begin{align}
\|u(k+1)-u(k)\|_P&=\|G_{\theta}\Phi(\theta(k+1))-A_uu(k)-G_{\theta}\Phi(\theta(k))+A_uu(k-1) \|_P \nonumber \\
&\leq \|G_{\theta}\Phi(\theta(k+1))-G_{\theta}\Phi(\theta(k))\|_P +\|A_uu(k)-A_uu(k-1)\|_P \nonumber \\
&\leq L_{\Phi}\|G_{\theta}\|_P\|\theta(k+1)-\theta(k)\|_P +\|A_u\|_P\|u(k)-u(k-1)\|_P
\end{align}
let $a=\|A_u\|_P<1$, as we have assumed and $\delta_k=L_{\Phi}\|G_{\theta}\|_P\|\theta(k+1)-\theta(k)\|_P\rightarrow 0 $ due to \eqref{theta_cauchy} and the fact that $\|x\|_P\leq \sqrt{\lambda_{max}(P)}\|x\|$.Thus, defining $x_k=\|u(k)-u(k-1)\|_P$,  we rewrite the previous inequality:
\begin{equation}
x_{k+1}\leq ax_k+\delta_k
\end{equation}
with $a<1$ and $\frac{\delta_k}{1-a} \rightarrow 0 $. So, this inequality satisfy the conditions of lemma 3, p.45 of \cite{Polyak} and consequently it converges to zero, thus the sequence $\|u(k+1)-u(k)\|_P$ is convergent to zero, so the sequence $u(k)$ is convergent to an equilibrium point. So finally, the coupled dynamics have an equilibrium which is globally asymptotically stable.
\end{proof}
The usefulness of this lemma arises form the fact that the opinion dynamics are stable for every graph structure as the matrix $A_{\theta}=(D+C)^{-1}(A+C)$ has the desired properties for every adjacency matrix $A$ and its degree matrix $D$. So, this lemma enables us to focus on the stabilization of the action dynamics, through the graph design and the consequent tuning of the matrix $A_u$, guaranteeing that the coupled dynamics will remain stable for every such design.

From the previous lemma, using $P=\mathcal{I}$ in the Lyapunov matrix inequality (thus $\|\cdot\|_P=\|\cdot\|_2$) and $G_uA=G_u(D+\mathcal{I})(D+\mathcal{I})^{-1}A$ we can derive the following simple but restrictive stability condition for the spectral radius of $G_u(D+\mathcal{I})$, $\rho(G_u(D+\mathcal{I}))=max\{|\lambda_i(G_u(D+\mathcal{I}))|, i=1...N\}$ to be less than one as well or equivalently:
\begin{equation}\label{suf}
\frac{(d_i+1)g_i}{g_i+(d_i+1)^2}\leq 1 \Rightarrow g_i\leq d_i+2, \forall i
\end{equation}
since it this case
\begin{align*}
  \|A_u\| \leq \|G_u(D+\mathcal{I})\|\|(D+\mathcal{I})^{-1}A\| \leq max\{|\lambda_i(G_u(D+\mathcal{I}))|, i=1...N\}\|(D+\mathcal{I})^{-1}A\|
\end{align*}
because the matrix $G_u(D+\mathcal{I})$ is diagonal. For the second norm it holds:
\begin{align*}
  \|(D+\mathcal{I})^{-1}A\| \leq \|(D+\mathcal{I})^{-1}\|\|A\| =\frac{1}{d_{max}+1}\|A\|
\end{align*}
and for $\|A\|$ it holds
\begin{equation*}
  \|A\|\leq\sqrt{\|A\|_{\infty}\|A\|_1}=d_{max}.
\end{equation*}
So,
\begin{equation*}
  \|(D+\mathcal{I})^{-1}A\|\leq \frac{d_{max}}{d_{max}+1}<1.
\end{equation*}

\begin{remark}\label{rem3}
We state this simple observation here because we can exploit its simplicity to use it as a heuristic for a stable topology design. That is, since this condition guarantees that the coupled dynamics converge on a graph with $min\{d_i\}\geq max\{g_i\}-2$ we know that a ring lattice of degree $d_1=\lceil max\{g_i\}-2\rceil_{even}$ is a topology that stabilizes these dynamics.
\end{remark}

\subsection{Simulations on the model's stability properties}
We present here some simulations of the aforementioned dynamics over different graph structures, that motivated us to formulate the topology design problem. In these simulations we consider a network of $n=20$ agents participating in a repeated social choice procedure for $T=100$ times. The parameters $c_i$ indicating the obstinateness of the agents are chosen from the interval $[10,100]$. The parameters $g_i$ indicating the manipulative tendencies of the agents are randomly chosen from the interval $[0,15]$. Their initial opinions are randomly chosen from the $[0,10]$ interval. Their initial actions are the desired ones according to their initial opinions $u_i(0)=\phi(\theta_i(0))$, where the function $\Phi$ is considered to be $\Phi(\theta)=10\textrm{tanh}(\theta/10)$, which is both continuous and locally Lipschitz.

Firstly, we present the convergent opinion and action dynamics Fig:\ref{fig:ra_04} on a realization of a random graph \cite{random} with edge probability $p=0.4$ Fig:\ref{fig:ra_04_g}. In the presented case the graph has $|E|=81$ edges and the spectral radius of the resulting matrix $A_u$ equals $\lambda_{max}\{A_u\}=0.7774$, so it has the necessary stability properties.

\begin{figure}[h!]
	\centering
	\begin{subfigure}{.45\textwidth}
        \includegraphics[width=\textwidth]{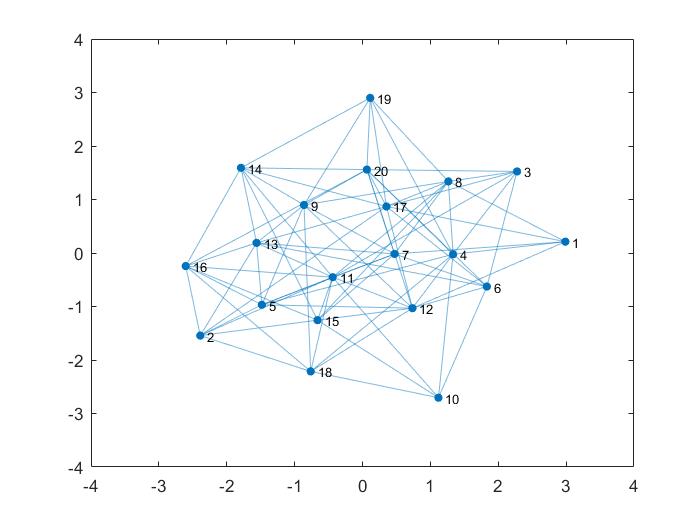}
        \caption{A random graph with edge probability p=0.4.}
        \label{fig:ra_04_g}
    \end{subfigure}
	\begin{subfigure}{.45\textwidth}
		\includegraphics[width=\textwidth]{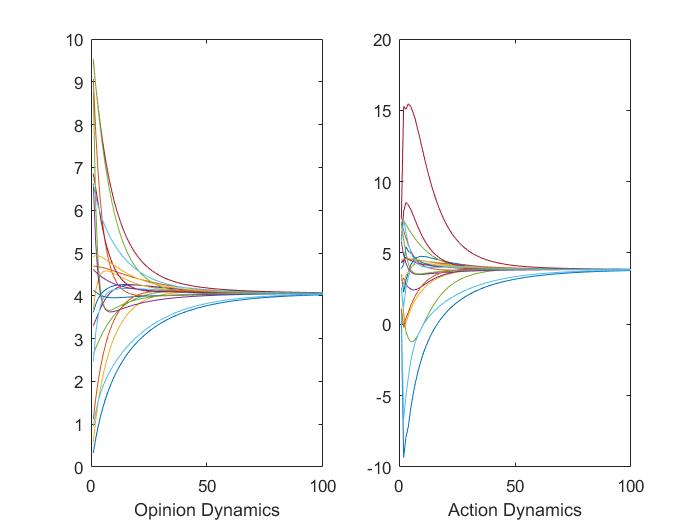}
		\caption{Opinion and action dynamics.}
        \label{fig:ra_04}
	\end{subfigure}
\end{figure}

Subsequently, a case of nonconvergent dynamics will be presented. The dynamics Fig:\ref{fig:unstable} result from a realization of a random graph with edge probability $p=0.3$ Fig:\ref{fig:unstable_g}, which has $|E|=54$ edges and  $\lambda_{max}\{A_u\}=1.0418$.

\begin{figure}[h!]
	\centering
	\begin{subfigure}{.45\textwidth}
        \includegraphics[width=\textwidth]{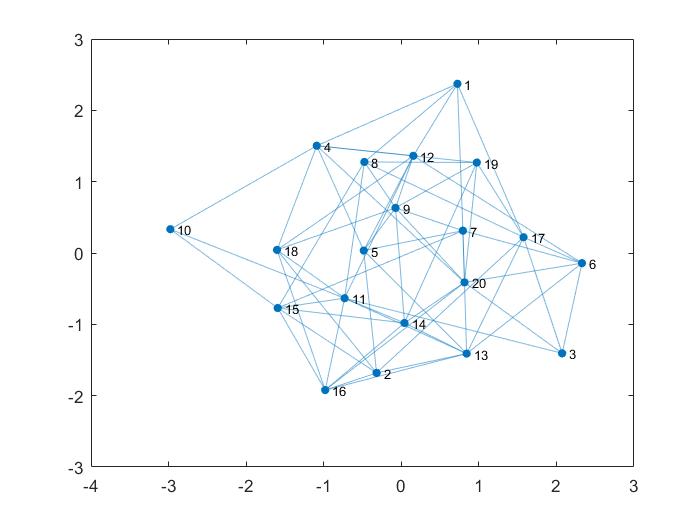}
        \caption{A random graph with edge probability p=0.3.}
        \label{fig:unstable_g}
    \end{subfigure}
	\begin{subfigure}{.45\textwidth}
		\includegraphics[width=\textwidth]{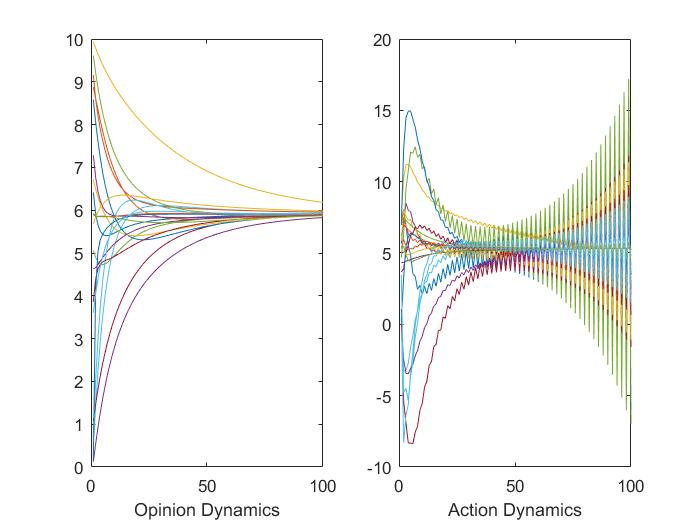}
		\caption{Opinion and unstable action dynamics.}
        \label{fig:unstable}
	\end{subfigure}
\end{figure}

We consider now the problem of choosing a proper graph structure, which would result in stable dynamics and be as close as possible to the aforementioned unstable one with respect to the edge number $|E|$ in this case. We make several experiments beginning from an $L^*$-lattice (a graph where all the agents have the same degree $L^*$), which satisfies our sufficient condition ($L^*>g_{max}-2$), $L^*=14$ in this example. Then we relax this condition by considering lattices of smaller node degree until the dynamics become unstable, as shown in table \ref{table}.

\begin{table}[h!]
\centering
\begin{tabular}{l l l}
\toprule
\textbf{Graph structure} & \textbf{$\lambda_{max}(A_u)$} & \textbf{$|E|$}\\
\midrule
$L^*$-lattice & 0.4042 & 140 \\
8-lattice & 0.7758 & 80 \\
6-lattice & 1.0114 & 60 \\
Small-world & 0.9491 & 60 \\
\bottomrule
\end{tabular}
\caption{Stability of several graph structures}
\label{table}
\end{table}
The most interesting observation we made from our experiments was that while the 6 degree lattice results in unstable dynamics if we rewire some of its edges and thus create a small world graph, as introduced by J. Watts and S.Strogatz (1998), the dynamics become stable. This indicates that a well structured topology -whose properties can be studied analytically- is not necessarily a best choice for our problem, on the opposite the introduction of some random rewirings results in better structures. This was a motivation for the following general formulation of the topology design problem, which is not restricted on several special classes of topologies.

\section{Network topology design for the stabilization of the action dynamics}

\subsection{Notation and Problem statement}
The network topology design problems are emerging in many different fields \cite{Kordonis2}-\cite{Kordonis3} and in more formulations they are considered to be difficult (NP-hard) problems. That is because the decision variables stand for the presence, the addition or the removal of nodes or links and so they take integer values, resulting in Integer Programs with various types of constraints.

Similarly, in our case we consider the vector $x\in \{0,1\}^{n(n-1)/2}$, which denotes the occurrence of a change of an edge -addition or removal of an edge- in the existing graph structure and constitutes our decision variables. The nodes of the graph remain unchanged.

Let $\{P^k, k=1...\frac{n(n+1)}{2}\}$ be a basis of the symmetric $n\times n$ matrices. Specifically, consider the matrices $P^k$ with $P^k_{ij}=P^k_{ji}=1$ if $i=\max_{m\geq 0}\{\sum_{l=1}^{m-1}(n-(l-1))\leq k\}$ and $j=i-1+k-\sum_{l=1}^{i-1}(n-(l-1))$ and $P^k_{ij}=0$ elsewhere. The diagonal matrices of this basis, i.e. $\{P^k: k\in K_d=\{\sum_{l=1}^{i-1}(n-(l-1))+1, i=1...n\}\}$, we will denote them $P^i_d$ since each $k\in K_d$ corresponds to an $i \in \{1...n\}$.

\begin{example}
  We present for example the aforementioned basis for the $2\times 2$ symmetric matrices:
  \begin{align*}
    P^1 & = \begin{bmatrix} 1 & 0\\ 0 & 0 \end{bmatrix} & P^2 & = \begin{bmatrix} 0 & 1\\ 1 & 0 \end{bmatrix} & P^3 & = \begin{bmatrix} 0 & 0\\ 0 & 1 \end{bmatrix}
  \end{align*}
  The set $K_d=\{P^1, P^3\}$, so $P^1_d=P^1$ and $P^2_d=P^3$.
\end{example}

Now with this notation we can write $A_0=\sum_{k\notin K_d}x_0(k)P^k$, where the vector $x_0$ stands for the coordinates of $A_0$ with respect to the aforementioned basis $\{P^k, k=1...\frac{n(n+1)}{2}\}$ except its diagonal elements whose coordinates are all zero. From the definition of $P^k$ it holds that $x_0(k) \in \{0,1\}$.

The topology design procedure consists of the addition of some new edges and the removal of some existing edges. So, we define the following sign function $\mathcal{S}_{x_0}(k)=1$ if $x_0(k)=0$ and  $\mathcal{S}_{x_0}(k)=-1$ if $x_0(k)=1$, which multiplied with the vector of changes $x$ indicates which changes correspond to an addition of an edge and which to a removal.

So the adjacency matrix of the graph depends linearly on the changes' vector $x$:
\begin{equation}
A(x)=A_0+\sum_{k=1}^{n(n-1)/2}x(k)P^k\mathcal{S}_{x_0}(k)
\end{equation}
form this equation we deduce that $A(x)=[\mathcal{L}^A_{ij}(x)]$  where $\mathcal{L}^A_{ij}(x)$ are linear functions of $x$. The degree matrix changes accordingly:
\begin{equation}
D(x)=\sum_{i=1}^ne_i(A(x)\mathbb{1})^TP^i_d
\end{equation}
which also depends linearly on $x$, i.e. $D(x)=\textrm{diag}\{\mathcal{L}^D_i(x)\}$ where $\mathcal{L}^D_i(x)$ are linear functions of $x$.

Subsequently, we define the matrix functions:
\begin{align}
G_u(x)=G(G+(D(x)+I)^2)^{-1}=\textrm{diag}\{\frac{g_i}{g_i+(\mathcal{L}^D_i(x)+1)^2}\}
\end{align}
and
\begin{equation}
A_u(x)=G_u(x)A(x)
\end{equation}
which are nonlinear with respect to the decision variables $x$.

Applying the Lyapunov stability criterion on the matrix $A_u(x)=G_u(x)A(x)$ we obtain the following matrix inequality for $P>0$ and $x$:
\begin{equation}
A(x)G_u(x)PG_u(x)A(x)-P\leq Q
\end{equation}
The matrix $Q$ is a negative definite matrix, for example $Q=-q\mathcal{I}$, where $q$ is a design parameter affecting the stability properties of the system as well as the size of the feasible region of the optimisation problem. In the simulations presented in the next section this parameter takes values of the order: $q\sim10^{-2}$.

Let the $F_x=\{x: \exists P>0: A(x)G_u(x)PG_u(x)A(x)-P\leq-q\mathcal{I}\}$. This set contains all the feasible designs, i.e. the vectors $x$ for witch the induced graph described by the adjacency matrix $A(x)$ has the desired stability properties.

In order to choose an element of the aforementioned feasible set as a best design, we consider the criterion of the minimum change from the initial graph structure, which is a natural criterion as especially on graphs representing social interactions it may be very difficult to persuade someone to abandon a friend or make a new one. So we consider the minimization of $ \|x\|_1 $, which is equivalent to the minimization of the linear objective $\textbf{1}^Tx$. The resulting problem is:

\begin{align}\label{problem}
\min_{x,P} &\{ \textbf{1}^Tx \}\\
& x\in \{0,1\}^{n(n-1)/2} \\
& \exists P > 0: A(x)G_u(x)PG_u(x)A(x)-P\leq-q\mathcal{I} \label{last_c}
\end{align}

\begin{remark}
  If for some reasons some edges of the network are considered to be more important than others, or the cost to add or remove them is different, we can formulate a similar optimisation problem substituting the objective by a weighted sum of the changes $w^Tx$, $w_i\geq 0$. Moreover, several linear constraints may be added so as to describe restrictions on the design parameters due to special structural characteristics of the network, which may be important to be preserved or due to special characteristics of several nodes, whose neighborhood cannot be affected. These changes in the optimisation problem formulation do not increase the difficulty of the problem as it lies on the constraint \eqref{last_c}.
\end{remark}

In order to simplify the nonlinear, non-polynomial (on the decision variables $x$) constraint $\exists P>0: A(x)G_u(x)PG_u(x)A(x)-P\leq-q\mathcal{I}$ we consider the change of variables $Z=G_u(x)PG_u(x)$ and prove the following proposition.
\begin{proposition}
For every point $x$, if there exists a matrix $Z > 0$:
\begin{equation}\label{z-con}
  A(x)ZA(x)-G_u^{-1}(x)ZG_u^{-1}(x)\leq-q\mathcal{I}
\end{equation}
then there exists a matrix $P > 0$: $$A(x)G_u(x)PG_u(x)A(x)-P\leq-q\mathcal{I}$$.
\end{proposition}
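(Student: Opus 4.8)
The plan is to exhibit the required matrix $P$ explicitly, by running the change of variables $Z=G_u(x)PG_u(x)$ backwards. First I would record the elementary structural facts about $G_u(x)$ for a \emph{fixed} design $x$: it is the diagonal matrix $\textrm{diag}\{g_i/(g_i+(\mathcal{L}^D_i(x)+1)^2)\}$, hence symmetric, and under the standing assumption that $g_i>0$ for every $i$ — which is in any case implicit in \eqref{z-con}, since $G_u^{-1}(x)$ appears there — all of its diagonal entries are strictly positive, so $G_u(x)$ is invertible and $G_u^{-1}(x)$ is again diagonal, symmetric and positive definite.

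Given a matrix $Z>0$ satisfying \eqref{z-con}, I would then simply set
\begin{equation*}
P := G_u^{-1}(x)\,Z\,G_u^{-1}(x).
\end{equation*}
Since $G_u^{-1}(x)$ is symmetric, $P^T=G_u^{-1}(x)Z^TG_u^{-1}(x)=P$, so $P$ is symmetric; and since $G_u^{-1}(x)$ is nonsingular, for every $w\neq 0$ we have $w^TPw=(G_u^{-1}(x)w)^TZ(G_u^{-1}(x)w)>0$ because $G_u^{-1}(x)w\neq 0$ and $Z>0$. Hence $P>0$. By construction $G_u(x)PG_u(x)=Z$ and $G_u^{-1}(x)ZG_u^{-1}(x)=P$.

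It then remains only to substitute. Using $A_u(x)=G_u(x)A(x)$ together with the symmetry of $A(x)$ and $G_u(x)$, the left-hand side of the target inequality becomes
\begin{equation*}
A(x)G_u(x)PG_u(x)A(x)-P=A(x)ZA(x)-G_u^{-1}(x)ZG_u^{-1}(x),
\end{equation*}
which is precisely the left-hand side of \eqref{z-con} and therefore is $\leq -q\mathcal{I}$ by hypothesis. This establishes the proposition. I do not anticipate a genuine obstacle: the whole point is that \eqref{z-con} is nothing but the image of the sought Lyapunov inequality under the invertible congruence $Z\mapsto G_u^{-1}(x)ZG_u^{-1}(x)$, so the two are equivalent for fixed $x$. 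The only step demanding a word of care is the invertibility of $G_u(x)$, i.e. that no manipulability parameter $g_i$ vanishes; if some $g_i=0$ one would have to restrict the congruence to the block of genuinely manipulative agents, the remaining coordinates contributing trivially.
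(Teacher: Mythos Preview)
Your proposal is correct and follows essentially the same approach as the paper: define $P:=G_u^{-1}(x)ZG_u^{-1}(x)$, verify $P>0$ via the quadratic form, and substitute to recover the original Lyapunov inequality. Your treatment is in fact slightly more careful than the paper's, since you explicitly flag the need for $g_i>0$ to ensure invertibility of $G_u(x)$, which the paper leaves implicit.
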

\begin{proof}
We use the mapping $Z=G_u(x)PG_u(x)$ from $P\in \mathcal{M}^{S+}_n$ to $Z\in \mathcal{M}^{S+}_n$.
For each element of the matrices $Z$ and $P$ it holds that :
$$z_{ij}=\frac{g_ig_j}{[g_i+(\mathcal{L}^D_i(x)+1)^2][g_j+(\mathcal{L}^D_j(x)+1)^2]}p_{ij}, $$
which is a bijection. Moreover, if $Z > 0$ then for $P=G_u^{-1}ZG_u^{-1}$ it holds that for every vector $x$:
$$x^TPx=x^TG_u^{-1}ZG_u^{-1}x=v^TZv >0$$ for $v=G_u^{-1}x$, so $P > 0$.
Finally, substituting the change of variables $Z=G_u(x)PG_u(x)$ in $A(x)ZA(x)-G_u^{-1}(x)ZG_u^{-1}(x)\leq-q\mathcal{I}$ we take the desired inequality $A(x)G_u(x)PG_u(x)A(x)-P\leq-q\mathcal{I}$.
\end{proof}

The new constraint \eqref{z-con} is polynomial in the decision variables $x$, so with a proper change of variables it can be transformed to a Bilinear Matrix Inequality (BMI). We give the following simple example, from \cite{vanantwerp2000tutorial} p.372, to explain this change of variables:
\begin{example}
  Let the polynomial inequality $x^3+yz<1$ . Defining $w=x^2$ and $v=x$ we have the following equivalent system of bilinear inequalities:
  \begin{align*}
    1-xw-yz & >0  \\
    w-xv & \geq 0 \\
    xv-w & \geq 0 \\
    x-v & \geq 0 \\
    v-x & \geq 0
  \end{align*}
\end{example}
In our case, each element $h_{ij}$ of the polynomial matrix $A(x)ZA(x)-G_u^{-1}(x)ZG_u^{-1}(x)$ is a $4^{th}$ degree polynomial of the decision variables $x$:
\begin{align*}
  h_{ij}=  \sum_{l=1}^{n}(\sum_{k=1}^{n}\mathcal{L}^A_{ik}(x)z_kl)\mathcal{L}^A_{lj}(x)-\frac{z_{ij}}{g_ig_j)}[g_i+(\mathcal{L}^D_i(x)+1)^2)][g_j+(\mathcal{L}^D_j(x)+1)^2)]
\end{align*}
Using the fact that $(x(k))^n=x(k)$ for every $n$ since $x(k)$ is 0 or 1 and introducing some extra variables $y_{kl}=x(k)x(l)$ we can write the polynomial matrix inequality \eqref{z-con} as a  Bilinear Matrix Inequality, with the aid of the matrices of the basis $\{P^k\}$.

The feasibility of a BMI is known to be a nonconvex problem in its general case \cite{Safonov}, so the same holds for our initial problem \eqref{problem}-\eqref{last_c}. The difficulty to deal with the BMI integer constrained problem is also discussed in \cite{gross2011optimized}. Moreover, due to the difficulty of the topology design problem in general, it has to be stated here that almost all of our references in this topic \cite{Kordonis2}-\cite{Kordonis3} use heuristics or meta-heuristics, except the ones considering simplifying assumptions or relaxations to deal with a convex problem in the end.

\subsection{A genetic algorithm for the topology design problem}

Genetic algorithms are a well known meta-heuristic which can be applied to obtain suboptimal solutions in a variety of difficult (NP-hard) search and optimisation problems \cite{goldberg1989genetic}. As such, it is evident that these algorithms are a useful tool for dealing with network topology design problems and they have already been applied in this field \cite{ferentinos2007adaptive},  \cite{kumar1995genetic}. Following this direction, we develop a genetic algorithm to obtain a feasible solution for the nonconvex integer programming problem \eqref{problem}-\eqref{last_c}. In order to avoid the explosion of the dimensionality which results to a very slow convergence for the genetic algorithm, we use the genetic algorithm to search only in the space of the decision variables $x$ rather than in the whole space $(x,P)$. However, this search may lead to several topologies which will not satisfy the constraint \eqref{last_c}. To deal with this we observe that the constraint \eqref{last_c} is linear with respect to the matrix variable $P$, so its feasibility can be efficiently checked with the use of a projective method based algorithm for Linear Matrix Inequalities (LMIs). So, for each new topology produced by the genetic operations we check its feasibility with an LMI solver and we drop it out of the next generation if it is infeasible.  The basic characteristics of this algorithm are enlisted below:

\textbf{Chromosomes:} Each chromosome of the genetic algorithm is a 0-1 sequence of length $\frac{n(n-1)}{2}$ representing the vector $x_0+x\cdot \mathcal{S}_{x_0}$ for some changes' vector x. The vectors $x_0$, x and $\mathcal{S}_{x_0}$ are defined in the previous section, while the symbol "$\cdot$" denotes elementwise multiplication of the two vectors.

\textbf{Initial population:} As initial population for the genetic algorithm we consider a specific number of feasible random perturbations of the initial topology $x_0$. That is we produce a number of chromosomes of the form $x_0+x\cdot \mathcal{S}_{x_0}$, which satisfy the constraint \eqref{last_c}, where x are randomly derived 0-1 sequences. The feasibility check, which is described below, is applied on these chromosomes in order to verify which of them are satisfying the constraint \eqref{last_c} and reject the others from the initial population.

\textbf{Fitness function:} The fitness function of the genetic algorithm coincides with the objective function of the problem  \eqref{problem}-\eqref{last_c}, so it has the following form $\textrm{fitness}(\textrm{chromosome})=\|\textrm{chromosome}-x_0\|_1=\|x_0+x\cdot \mathcal{S}_{x_0}-x_0\|_1=\|x\|_1 $.

\textbf{Selection:} For the choice of a portion of the population for the breeding of the next generation we use a simple truncation selection criterion. We choose the $50\%$ fittest part of the population in the case the size population exceeds a specific lower bound or we hold the whole population if its size is smaller than this lower bound. The reason for this is to avoid the diminishment of the population in the case that many new offsprings are rejected because they do not satisfy the constraints. The next generation of the population is initialised by the selected part of the previous population.

The truncation selection has the drawback that it may lead to elitism, that is the selection of only the temporarily best chromosomes which may be far from the global optimum. Thus, the algorithm may converge to a local minimum of the optimisation problem, but the convergence speed of the algorithm if we use another selection procedure, such as fitness proportionate selection, is much slower, so we have kept this simple method for our experimental simulations. Moreover, by choosing our initial conditions relatively close to the optimum - we initialise the algorithm with perturbations of the initial infeasible topology which are adequately close to it and feasible - we enhance our chances to find the global optimum even with this selection procedure. Of course, in cases of practical interest where great accuracy is needed and with sufficient computing power available, we can easily replace this subroutine by one applying fitness proportionate selection.

\textbf{Crossover:} The crossover operator considered here chooses randomly two parents form the selected portion of the population and chooses also randomly a crossover point between $1...\frac{n(n-1)}{2}$ and produces two offsprings form the two possible combinations of the parent chromosomes around this point.

\textbf{Mutation:} The mutation operator applied to an offspring changes each of its bits with probability $p_m=\frac{2}{n(n-1)}$, resulting on an average change of one bit per chromosome.

\textbf{Feasibility check:} After the production of the new offsprings with the application of the genetic operators, each offspring is checked for the feasibility of the constraint \eqref{last_c}. For this we use an LMI solver, which uses a projective method algorithm, to examine the existence of a matrix $P>0$ which satisfies the LMI \eqref{last_c}, where the matrices $A(x)$ and $G_u(x)$ have the fixed values corresponding to the vector x of the offspring's chromosome $x_0+x\cdot \mathcal{S}_{x_0}$. If this LMI is found feasible the new chromosome is added to the next generation, else it is rejected.

\textbf{Termination criterion:} The genetic algorithm terminates after a specified number of generations $N$. In fact, in the following simulations we have chosen the number of generations through experimentation so as to not observe any improvement in the objective function in the final generations. The fittest chromosome of the last generation is returned as solution for our topology design problem.

\subsection{Simulations of the results of the genetic algorithm}
In the following simulations we consider a network of $n=20$ agents participating in a repeated social choice procedure for $T=300$ times. The parameters $c_i$ are chosen randomly from the interval $[10,100]$. The parameters $g_i$ are randomly chosen form the interval $[0,10]$. The function $\Phi$ which maps the opinions to the desired actions is considered to be $\Phi(\theta)=10tanh(\theta/10)$, which is both continuous and locally Lipschitz. The initial opinions $\theta_i(0)$ are randomly chosen from the interval $[0,10]$ and the initial actions are the ones corresponding to these opinions $u_i(0)=\phi(\theta_i(0))$. All the aforementioned parameters remain the same in both simulations.

The initial graph topology is the realisation of a random graph with edge probability $p=0.2$ shown in figure Fig:\ref{fig:initial}. The resulting opinion and action dynamics are shown in figure Fig:\ref{fig:ini_dy}, where we can see that the action dynamics are unstable.

\begin{figure}[h!]
	\centering
	\begin{subfigure}{.45\textwidth}
        \includegraphics[width=\textwidth]{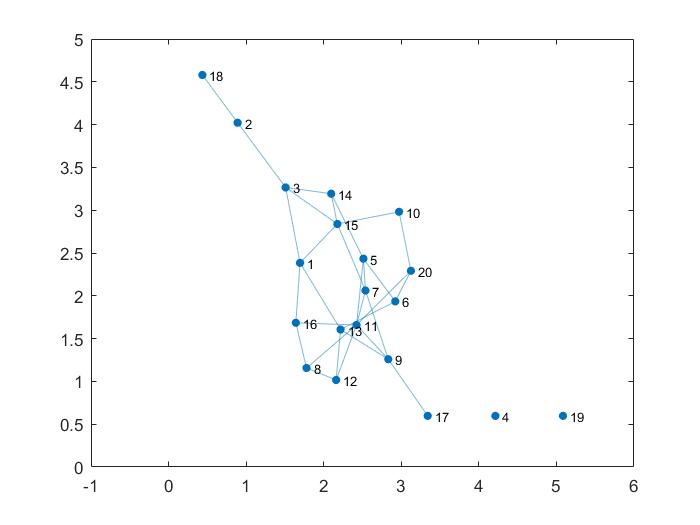}
        \caption{The initial graph topology, derived as a random graph with edge probability $p=0.2$.}
        \label{fig:initial}
    \end{subfigure}
	\begin{subfigure}{.45\textwidth}
		\includegraphics[width=\textwidth]{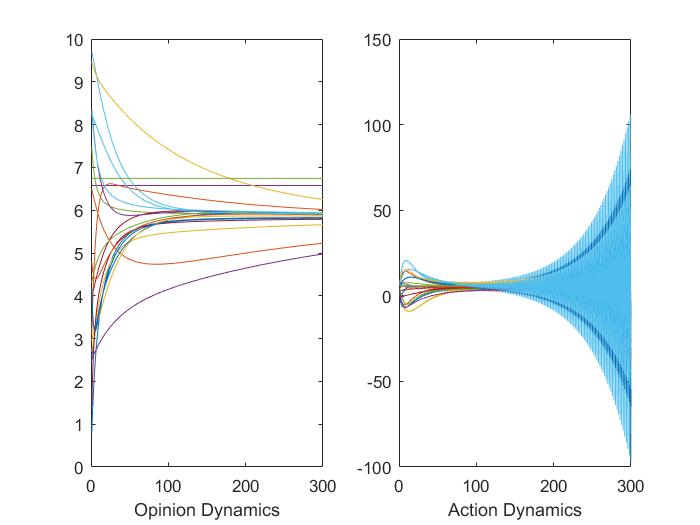}
		\caption{Unstable action dynamics on the initial graph topology.}
        \label{fig:ini_dy}
	\end{subfigure}
\end{figure}

Applying the genetic algorithm presented in the previous section to the initial graph topology we obtain the graph topology presented in figure Fig:\ref{fig:design}, which differs from the initial one only on three edges. The resulting opinion and action dynamics are shown in figure Fig:\ref{fig:final_dy}, where we can see that the action dynamics are stable over the designed graph topology.

\begin{figure}[h!]
	\centering
	\begin{subfigure}{.45\textwidth}
        \includegraphics[width=\textwidth]{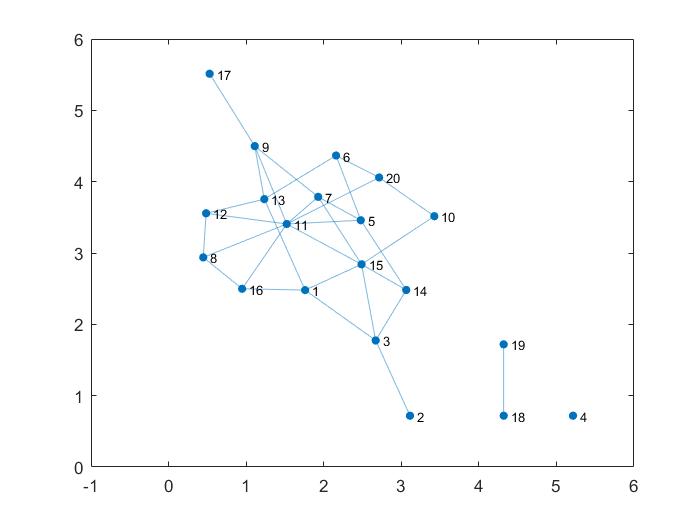}
        \caption{The designed graph topology by the genetic algorithm.}
        \label{fig:design}
    \end{subfigure}
	\begin{subfigure}{.45\textwidth}
		\includegraphics[width=\textwidth]{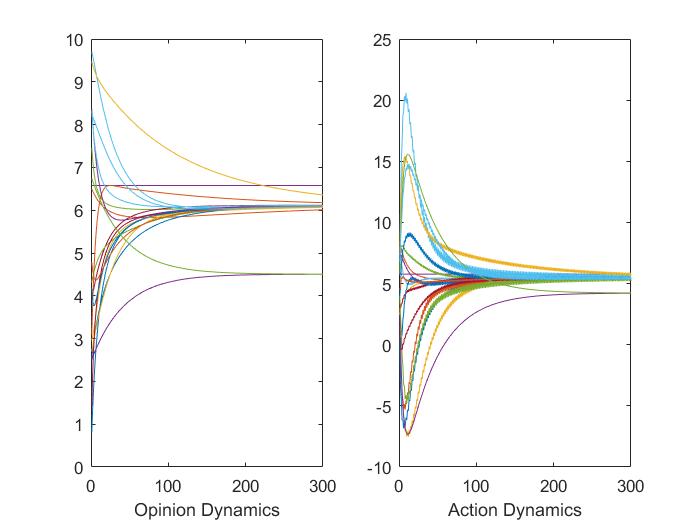}
		\caption{Stable action dynamics on the designed graph topology.}
        \label{fig:final_dy}
	\end{subfigure}
\end{figure}

\textbf{Comments:} As we observe from the simulations above the graph topology that derived from the genetic algorithm, as a feasible point of our optimisation problem which satisfies the BMI constraint, results in stable action dynamics. Moreover, with respect to its optimality, we have already pointed that the designed topology differs from the initial one on just 3 edges (specifically 1 edge has been removed and 2 new edges have been added), meaning that $\|x\|_1=3$ which is very small. It may be a suboptimal solution, but in most cases it might be an acceptable design. Finally, compared with the heuristic approaches developed in section 4.3 it outperforms them  by far, since the best we had achieved there was a difference of 8 on the amount, not on the exact location, of the existing edges of the two topologies, while now we achieved a difference of 3 on the exact location of the edges of the two topologies.

\subsection{Simulations over Special Structured Initial Topologies}

In the following simulations we consider a network of $n=20$ agents and we check just the structure of the resulting topologies after the implementation of the genetic algorithm on several special structured initial topologies. The parameters $g_i$ indicating the manipulative tendencies of the agents are chosen accordingly in each case in order to make the initial topology resulting in unstable dynamics.

\subsubsection{Ring}

For the ring topology Fig:\ref{fig:ring} the parameters $g_i$ indicating the manipulative tendencies of the agents are chosen randomly from the interval $[0,10]$. The ring is a very sparse structure for a connected one. It has only 20 edges while 19 are needed in order to be connected. Furthermore, its stability properties are not very enhanced - even small manipulative parameters result in instabilities. So, a connected stable topology differs a lot from the initial one. That's why our algorithm returns an unconnected topology as the optimal solution Fig:\ref{fig:unconnected_ring}. This topology has 5 edges and differs from the initial one on 15 edges. The unconnected designed topology is stable, since the isolation of the agents pauses their social interactions and results in the preservation of their initial opinions and actions, which are stable by the time they are not increasing.
\begin{figure}[h!]
	\centering
	\begin{subfigure}{.3\textwidth}
        \includegraphics[width=\textwidth]{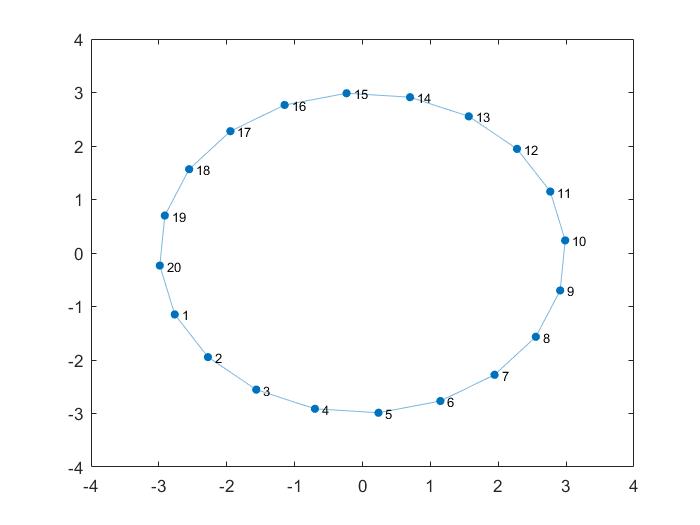}
        \caption{Initial ring topology}
        \label{fig:ring}
    \end{subfigure}
	\begin{subfigure}{.3\textwidth}
		\includegraphics[width=\textwidth]{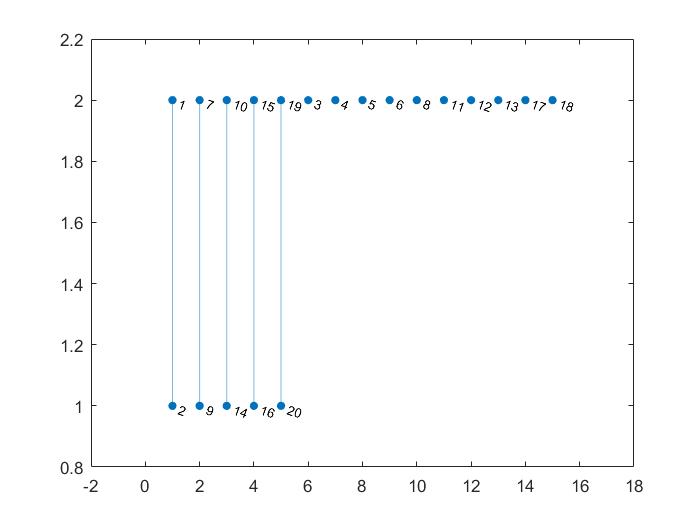}
		\caption{Designed unconnected topology from a ring(optimal)}
        \label{fig:unconnected_ring}
	\end{subfigure}
	\begin{subfigure}{.3\textwidth}
		\includegraphics[width=\textwidth]{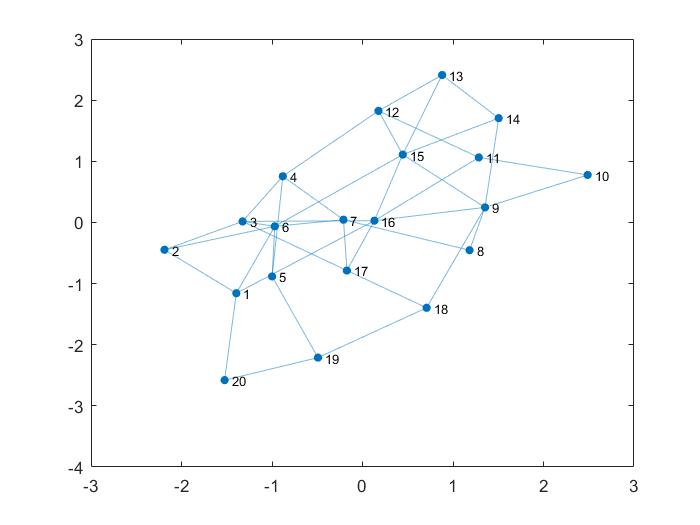}
		\caption{Designed connected topology from a ring}
        \label{fig:connected_ring}
	\end{subfigure}
\end{figure}

Even if it is mathematically acceptable, the isolation of the agents is a bit unrealistic and in many cases undesirable design. Subsequently, we add a simple linear constraint in the topology design problem demanding the designed topology to have at least 19 edges -the minimum edges needed to be connected. Interestingly, we obtain a connected topology Fig:\ref{fig:connected_ring}, which has 39 edges and differs from the initial one on 20 edges.

\subsubsection{4-lattice}

\begin{figure}[h!]
	\centering
	\begin{subfigure}{.45\textwidth}
        \includegraphics[width=\textwidth]{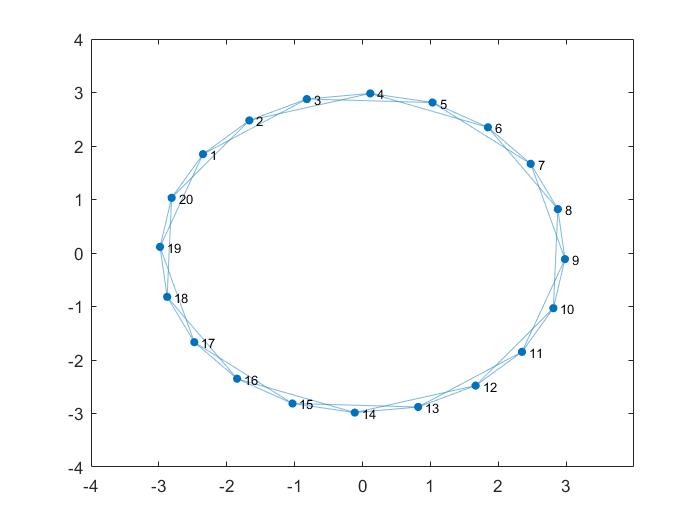}
        \caption{Initial 4-lattice topology}
        \label{fig:4_lattice}
    \end{subfigure}
	\begin{subfigure}{.45\textwidth}
		\includegraphics[width=\textwidth]{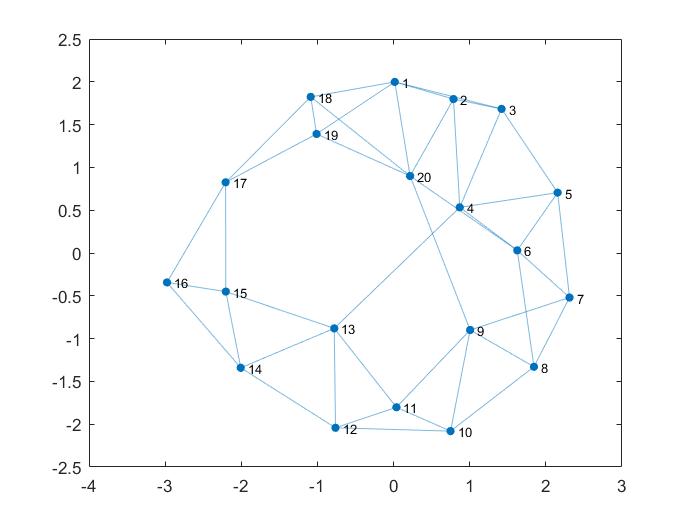}
		\caption{Designed topology from 4-lattice}
        \label{fig:4_lattice_final}
	\end{subfigure}
\end{figure}

For the 4-lattice Fig:\ref{fig:4_lattice} the parameters $g_i$ indicating the manipulative tendencies of the agents are chosen randomly from the interval $[0,20]$. This increase in the manipulation parameters shows from the beginning that the lattices have enhanced stability properties in comparison with the ring, as it is expected since they are more dense and well connected topologies. The 4-lattice depicted in Fig:\ref{fig:4_lattice} has 40 edges. Our design results in the topology Fig:\ref{fig:4_lattice_final} which has 43 edges and differs from the initial one on 5 edges.

\subsubsection{Star}

For the star topology Fig:\ref{fig:star} the parameters $g_i$ indicating the manipulative tendencies of the agents are chosen randomly from the interval $[0,20]$, except the one of the central node which is chosen much larger (here $g(1)=70$). That is because the star structure is a very robust one with respect to its stability properties, since the central node is very difficult to manipulate and to be manipulated as she has the most neighbors she could have. So, the parameters should be chosen large enough in order to arise instabilities on this initial topology. Moreover, the star graph has the least possible edges needed to be connected (19 edges), so it seems to be a very robust design for the number of its edges. That is the reason why our algorithm converges to an unconnected topology Fig:\ref{fig:unconnected_star} which is closer to the star topology than any connected stable one. It has only 3 edges and it differs form the initial topology on 18 edges. It shall be noted here that, as in the case of the ring, the unconnected designed topology is stable.

\begin{figure}[h!]
	\centering
	\begin{subfigure}{.3\textwidth}
        \includegraphics[width=\textwidth]{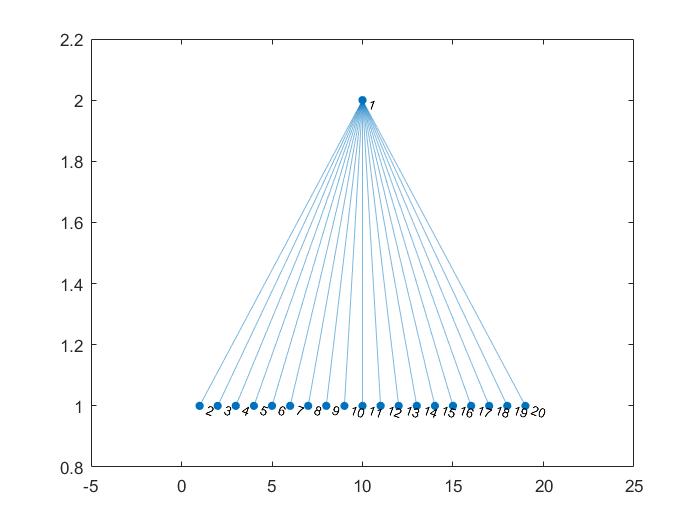}
        \caption{Initial star topology}
        \label{fig:star}
    \end{subfigure}
	\begin{subfigure}{.3\textwidth}
		\includegraphics[width=\textwidth]{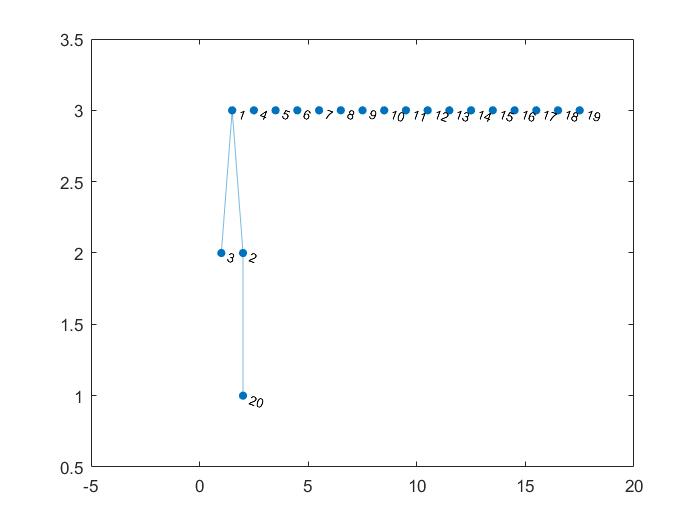}
		\caption{Designed unconnected topology from a star (optimal)}
        \label{fig:unconnected_star}
	\end{subfigure}
	\begin{subfigure}{.3\textwidth}
		\includegraphics[width=\textwidth]{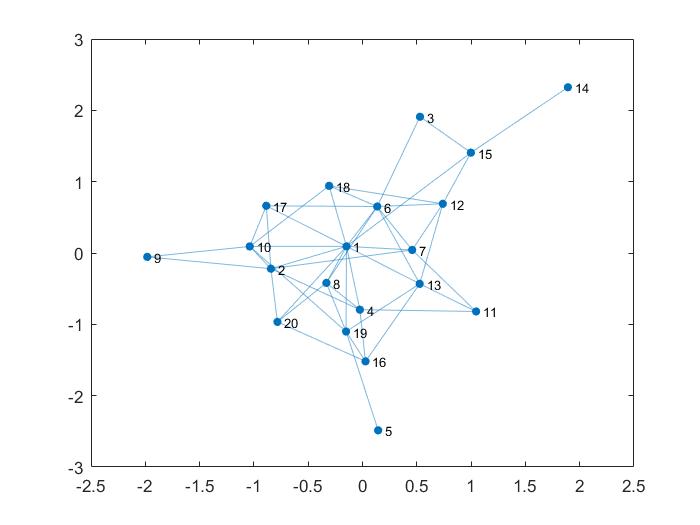}
		\caption{Designed connected topology from a star}
        \label{fig:connected_star}
	\end{subfigure}
\end{figure}

Subsequently, as in the case of the ring topology, we add an extra constraint for the topology to enhance a connected design and we derive the final topology depicted in Fig:\ref{fig:connected_star}. It has 47 edges and it differs from the initial one on 42 edges.

\textbf{Comments:} From the study of these special structures we deduce several interesting conclusions. At first, in the case of topologies with very few edges, such as a ring or a star graph, the isolation of some agents from the rest network is sometimes optimal as it effectively stops their manipulative activity. The fact is that such a design may not be acceptable by these agents and by the society. So, we add more constraints, which do not affect the difficulty of the problem, in order to avoid a design which may be optimal but inapplicable in social networks. Fortunately, since the increase of the agent's friends leads to the decrease of her ability to manipulate each one of them, as we deduce from the sufficient condition \eqref{suf}, it is guaranteed that there exists another topology with more edges than the initial, which satisfies the stability constraints and it is in fact a local minimum of our optimisation problem. We can also design this topology to be connected by adding more edges and not affecting its stability.

\section{Conclusion}
In this work we considered a social choice procedure as a repeated Nash game between social agents communicating over a social network. One contribution of this work is an enrichment of the model for social choice procedures proposed in \cite{Etesami3} by considering dynamically changing opinions and thus resulting in second order dynamics. However, our basic novelty is a new approach for the stabilization of these dynamics through the graph topology design, which results in an integer programming problem with a BMI constraint. Finally, we designed a proper genetic algorithm for this problem. Applying this algorithm to several graph topologies which had resulted in unstable dynamics we found that in the case of topologies with few edges, such as a ring or a star, the isolation of the manipulative agents is an optimal (or suboptimal) design, however it may not be socially preferable. Contrary to that, in the case of well-connected topologies the addition or the rewiring of a few edges can affect significantly the manipulative agents and result in the stabilisation of the dynamics.

\bibliography{bibliography-opinions}

\end{document}